\begin{document}
%
\title{A Learning and Masking Approach to Secure Learning}
\author{Linh Nguyen, Sky Wang, Arunesh Sinha\\
University of Michigan, Ann Arbor\\
\texttt{\{lvnguyen,skywang,arunesh\}@umich.edu}\\
}
\keywords{adversarial examples; robust learning} 

\begin{abstract}
Deep Neural Networks (DNNs) have been shown to be vulnerable against adversarial examples, which are data points cleverly constructed to fool the classifier. Such attacks can be devastating in practice, especially as DNNs are being applied to ever increasing critical tasks like image recognition in autonomous driving. In this paper, we introduce a new perspective on the problem. We do so by first defining robustness of a classifier to adversarial exploitation. Next, we show that the problem of adversarial example generation can be posed as learning problem. We also categorize attacks in literature into high and low perturbation attacks; well-known attacks like FGSM~\cite{fgsm} and  our attack produce higher perturbation adversarial examples while the more potent but computationally inefficient Carlini-Wagner~\cite{carlini2017towards} (CW) attack is low perturbation. Next, we show that the dual approach of the attack learning problem can be used as a defensive technique that is effective against high perturbation attacks. Finally, we show that a classifier masking method achieved by adding noise to the a neural network's logit output protects against low distortion attacks such as the CW attack. We also show that both our learning and masking defense can work simultaneously to protect against multiple attacks. We demonstrate the efficacy of our techniques by experimenting with the MNIST and CIFAR-10 datasets. 
\end{abstract}

\maketitle

\section{Introduction}
Recent advances in deep learning have led to its wide adoption in various challenging tasks such
as image classification. However, the current state of the art has
been shown to be vulnerable to \emph{adversarial examples}, small perturbations of the original inputs, often indistinguishable to a human, but 
carefully crafted to misguide the learning models into producing incorrect outputs. Recent results have shown that generating
these adversarial examples are inexpensive~\cite{fgsm}.
Moreover, as safety critical applications such as autonomous driving increasingly rely on these tasks, it is
imperative that the learning models be reliable and secure against such adversarial examples. 


Prior work has yielded a lot of attack methods that generate adversarial samples, and defense techniques that improve 
the accuracy on these samples (see related work for details). 
However, defenses are often specific to certain attacks and cannot adaptively defend against any future attack and some general defense techniques  have been shown to be ineffective against more powerful novel attacks. More generally, attacks and defenses have followed the cat-and-mouse game that is typical of many security settings. Further, traditional machine learning theory assumes a fixed stochastic environment hence accuracy in the traditional sense is not enough to measure performance in presence of an adversarial agent.

In this paper, with the goal of generality, we pursue a principled approach to attacks and defense. Starting from a theoretical robustness definition, we present a attack and a defense that learns to generate adversarial examples against any given classifier and learns to defend against any attack respectively. Based on formal intuition, we categorize known attacks into high and low perturbation attacks. Our \emph{learning} attack is a high perturbation attack and analogously our \emph{learning} defense technique defends against high perturbation attack. For low perturbation attacks, we provide a \emph{masking} approach that defends against such attacks. Our two defense techniques can be combined to defend against multiple types of attacks. While our guiding principle is general, this paper focuses on the specific domain of adversarial examples in image classification.


Our \emph{first contribution} is a definition of \emph{robustness of classifiers} in presence of an adversarial agent. Towards the definition, we define the exploitable space by the adversary which includes data points already mis-classified (errors) by any given classifier and any data points that can be perturbed by the adversary to force mis-classifications. Robustness is defined as the probability of data points occurring in the exploitable space. We believe our definition captures the essence of the multi-agent defender-adversary interaction, and is natural as our robustness is a strictly stronger concept than accuracy. We also analyze why accuracy fails to measure robustness. The formal set-up also provides an intuition for all the techniques proposed in this paper.

Our \emph{second contribution} is an \emph{attack learning neural network} (ALN). ALN is motivated by the fact that adversarial examples for a given classifier $C$ are subsets of the input space that the $C$ mis-classifies. Thus, given a data distribution with data points $x$ and a classifier $C$ trained on such data, we train a feed forward neural network $A$ with the goal of generating output points $A(x)$ in the mis-classified space. Towards this end, we re-purpose an autoencoder to work as our ALN $A$ with a special choice of loss function that aims to make (1) the classifier $C$ mis-classify $A(x)$ and (2) minimize the difference between $x$ and $A(x)$. 

Our \emph{third contribution} are two defense techniques: \emph{defense learning neural network} (DLN) and \emph{noise augmented classifier} (NAC). Following the motivation and design of ALN, we motivate DLN $D$ as a neural network that, given any classifier $C$ attacked by an attack technique $A$, takes in an adversarial example $A(x)$ and aims to generate benign example $D(A(x))$ that \emph{does not} lie in the mis-classified space of $C$. The DLN is prepended to the classifier $C$ acting as a sanitizer for $C$.
 Again, similar to the ALN, we re-purpose an autoencoder with a special loss function suited for the goal of the DLN. For non-adversarial inputs the DLN is encouraged to reproduce the input as well as make the classifier predict correctly. 
 We show that DLN allows for attack and defense to be set up as a repeated competition leading to more robust classifiers. Next, while DLN works efficiently for attacks that produces adversarial examples with high perturbation, such as fast gradient sign method~\cite{fgsm} (FGSM), it is not practical for low perturbation attacks (discussed in details in Section~\ref{repeatDLN}) such as Carlini-Wagner~\cite{carlini2017towards} (CW). For low perturbation attacks, we present NAC which masks the classifier boundary by adding a very small noise at the logits output of the neural network classifier. The small noise added affects classification in rare cases, thereby ensuring original accuracy is maintained, but also fools low perturbation attacks as the attack is mislead by the \emph{incorrect} logits. DLN and NAC can work together to defend simultaneously against both high and low perturbation attacks.


We tested our approach on two datasets: MNIST and CIFAR-10. Our ALN based attack was able to attack all classifiers we considered and achieve performance comparable to other high perturbation attacks. Our defense approach made the resultant classifier robust to the FGSM and CW. Detailed experiments are presented in Section~\ref{section4} and~\ref{section5}. Missing proofs are in an online appendix\footnote{\url{https://drive.google.com/open?id=1CBaHsU6IL9jQ4UN_2yYUteUsbLP2aKMk}} (see footnote).

\section{Attack Model}
Given the adversarial setting, it is imperative to define the capabilities of the adversary, which we do in this section. First, we use \emph{inference phase} of a classifier to mean the stage when the classifier is actually deployed as an application (after all training and testing is done). The attacker attacks \emph{only} in the inference phase and can channel his attack \emph{only} through the inputs. In particular, the attacker cannot change the classifier weights or inject any noise in the hidden layers. The attacker has access to the classifier weights, so that it can compute gradients if required. The attacker's goal is to produce adversarial data points that get mis-classified by the classifier. These adversarial examples should be legitimate (that is not a garbage noisy image) and the true class and the predicted class of the data point could be additional constraints for the adversary.

\section{Approach}
This section formally describes our approach to the adversarial example generation and defense problem using the notion of robustness we define. We start by defining basic notations. Let the 
function $C: X \rightarrow Y$ denote a classifier that takes input data points with feature values in $X$ and outputs a label among the 
possible $k$ labels $Y = \{1, \ldots, k\}$. Further, for neural networks based classifiers we can define $C_p: X \rightarrow \Delta Y$ as the function that 
takes in data and produces a probability distribution over labels. Thus, $C = \max\{C_p (x)\}$, where $\max$ provides the maximum component of the 
vector $C_p (x)$. Let $\overline{sim}(x,x')$ denote the dissimilarity between $x$ and $x'$. Let $H(p,q)$ denote the cross entropy $- \sum_i p_i \log (q_i)$. In 
particular, let $H(p)$ denotes the entropy given by $H(p,p)$. For this paper, we assume $X$ is the set of legitimate images (and not garbage images or ambiguous images). Legitimate images are different for different domains, e.g., they are digits for digit classification. Given a label $y$, let $Cat(y)$ denote the categorical probability distribution with the component for $y$ set to $1$ and all else $0$. Let $\overline{opsim}(\mathbf{y},\mathbf{y}')$ denote the dissimilarity between output distributions $\mathbf{y},\mathbf{y}' \in \Delta Y$.

\subsection{Robustness}
We first introduce some concepts from PAC learning~\cite{Anthony2009NNL}, in order to present the formal results in this section. It is assumed that data points arise from a fixed but unknown distribution $\mathcal{P}$ over $X$. We denote the probability mass over a set $Z \subset X$ as $\mathcal{P}(Z)$. A loss function $l(y_x, C(x))$ captures the loss of predicting $C(x)$ when the true label for $x$ is $y_x$. As we are focused on classification, we restrict ourselves to the ideal $0/1$ loss, that is, $1$ for incorrect classification and $0$ otherwise. A classifier $C$ is chosen that minimizes the empirical loss over the $n$ training data points $\sum_{i=1}^n l(y_{x_i},x_i)$. Given enough data, PAC learning theory guarantees that $C$ also minimizes the expected loss $\int_X l(y_x, C(x)) \mathcal{P}(x)$. Given, $0/1$ loss this quantity is just $\mathcal{P}(M_C(X))$, where $M_C(X) \subset X$ denote the region where the classifier $C$ mis-classifies.  Accuracy for a classifier is then just $1 - \mathcal{P}(M_C(X))$. In this paper we will assume that the amount of data is always enough to obtain low expected loss. Observe that a classifier can achieve high accuracy (low expected loss) even though its predictions in the low probability regions may be wrong.

All classifier families have a capacity that limits the complexity of separators (hypothesis space) that they can model. A higher capacity classifier family can model more non-smooth separators\footnote{While capacity is defined for any function class~\cite{Anthony2009NNL} (includes deep neural networks), the value is known only for simple classifiers like single layered neural networks.}. Previous work~\cite{fgsm} has conjectured that adversarial examples abound due to the low capacity of the classifier family used. See Figure~\ref{fig:intuition}A for an illustration.

\textbf{Adversarial exploitable space}: Define $E_{C,\epsilon}(X) = M_C(X) \cup \{x~|~ \overline{sim}(x,M_C(X)) \leq \epsilon\}$, where $\overline{sim}$ is a dissimilarity measure that depends on the domain and $\overline{sim}(x,M_C(X))$ denotes the lowest dissimilarity of $x$ with any data point in $M_C(X)$. For image classification $\overline{sim}$ can just be the $l_2$ (Euclidean) distance: $\sqrt{\sum_i (x_i - x'_i)^2}$ where $i$ indexes the pixels. $E_{C,\epsilon}(X)$ is the adversarial exploitable space, as this space includes all points that are either mis-classified or can be mis-classified by a minor $\epsilon$-perturbation. Note that we assume that any already present mis-classifications of the classifier is exploitable by the adversary without the need of any perturbation. For example, if a stop sign image in a dataset is mis-classified then an adversary can simply use this image as is to fool an autonomously driven vehicle.

\textbf{Robustness}: Robustness is simply defined as $1 - \mathcal{P}(E_{C,\epsilon}(X))$. First, it is easy to see that robustness is a strictly stronger concept than accuracy, that is, a classifier with high robustness has higher accuracy. We believe this property makes our definition more natural than other current definitions. Further, another readily inferable property from the definition of $E_{C,\epsilon}$ that we utilize later is that a classifier $C'$ with $M_{C'}(X) \subset M_C(X)$ is  more robust than classifier $C$ in the same setting. We call a classifier $C'$ perfect if the robustness is 100\%. 

There are a number of subtle aspects of the definition that we elaborate upon below:
\begin{itemize}
\item A 100\% robust classifier can still have  $M_{C'}(X) \neq \phi$. This is because robustness is still defined w.r.t. $\mathcal{P}$, for example, large compact regions of zero probability with small sub-region of erroneous prediction far away from the boundary can still make robustness 100\%. However, $M_{C'}(X) = \phi$ provides 100\% robustness for any $\mathcal{P}$. Thus, robustness based on just $M_{C'}(X)$ and not $\mathcal{P}$ is a stronger but much more restrictive concept of robustness than ours.
\item A perfect classifier (100\% robust) is practically impossible due to large data requirement especially as the capacity of the classifier family grows. As shown in Figure~\ref{fig:intuition} low capacity classifiers cannot model complex separators, thus, large capacity is required to achieve robustness. On the other hand, classifiers families with large capacity but not enough data tend to overfit the data~\cite{Anthony2009NNL}. Thus, there is a delicate balance between the capacity of the classifier family used and amount of data available. The relation between amount of data and capacity is not very well understood for Dep Neural Networks. In any case, perfect robustness provides a goal that robust classifiers should aim to achieve.
In this paper, for the purpose of defense, we seek to increase the robustness of classifiers.
\item Robustness in practice may apparently seem to be computable by calculating the accuracy for the test set and the adversarially perturbed test set for any given dataset, which we also do and has been done in all prior work. However, this relies on the fact that the attack is all powerful, i.e., it can attack \emph{all} perturb-able points. It is easy to construct abstract examples with probability measure zero mis-classification set (single mis-classified point in a continuous Euclidean space) that is computationally intractable for practical attacks to discover. A detailed analysis of computing robustness is beyond the scope of this paper and is left for future work.
\item The definition can be easily extended to weigh some kinds of mis-classification more, if required. For example, predicting a malware as benign is more harmful than the opposite erroneous prediction. For our focus area of image classification in this paper, researchers have generally considered all mis-classification equally important. Also the $\overline{sim}$ function in the definition is reasonably well agreed upon in literature on adversarial learning as the $l_2$ distance; however, we show later in experiments that $l_2$ distance does not capture similarity well enough. Instantiating the definition for other domains such as malware classification requires exploring $\overline{sim}$ further such as how to capture that two malwares are functionally similar.
\end{itemize}

Lastly, compared to past work~\cite{wang2016theoretical,fawzi2015fundamental}, our robustness definition has a clear relation to accuracy and not orthogonal to it. Also, our definition uses the ideal 0/1 loss function rather than an approximate loss function $l$ (often used in training due to smoothness) as used in other definitions~\cite{madry2017towards,cisse2017parseval}. We posit that the 0/1 loss measures robustness more precisely, as these other approaches have specified the adversary goal as aiming to perturb in order to produce the maximum loss within an $\epsilon$ ball $B(x,\epsilon)$ of any given point $x$, with the defender expected loss defined as $\int_X \max_{z \in B(x,\epsilon)} l(y_x, C(z)) \mathcal{P}(x)$. However, this means that even if the class is same throughout the $\epsilon$ ball, with a varying $l$ the adversary still conducts a supposed ``attack'' and increases loss for the defender without flipping labels. For example, the well-known hinge loss varies rapidly within one of the classes and such supposed attacks could lead to an overestimation of the loss for defender and hence underestimate robustness. Further, use of an approximation in the definition allows an adversary to bypass the definition by exploiting the approximation by $l$ when the true loss is 0/1. It is an interesting question for future on what kind of approximations can help in computing robustness within reasonable error bounds.

Finally, we analyze if accuracy is ever suitable to capture robustness. First, we make a few mild technical assumptions that there exists a density $p(x)$ for the data distribution $\mathcal{P}$ over $X$, $X$ is a metric space with metric $d$ and $vol(X) = 1$. We have the following result:
\begin{theorem} \label{thm1}
$1 - a$ accuracy implies at least $1 - (a + \nu + K\epsilon/T)$ robustness for any output $C$ if
\begin{itemize}
    \item For all $x \in X$, $\overline{sim}(x,x') \geq T d(x,x')$ for some $T > 0$.
    \item $M_C(X)$ lies in a low density region, that is, for all $x \in M_C(X)$ we have $p(x) \leq \nu$ for some small $\nu$.
    \item $p(x)$ is $K$-Lipschitz, that is, $|p(x) - p(x')| \leq K d(x,x')$ for all $x,x' \in X$.
\end{itemize}
\end{theorem}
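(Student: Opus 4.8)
The plan is to translate both accuracy and robustness into statements about the probability mass $\mathcal{P}$, and then bound the mass of the exploitable region $E_{C,\epsilon}(X)$ by decomposing it into the mis-classified set $M_C(X)$ and a thin neighborhood around it. Since $1-a$ accuracy means $\mathcal{P}(M_C(X)) = a$, and robustness equals $1 - \mathcal{P}(E_{C,\epsilon}(X))$, it suffices to prove $\mathcal{P}(E_{C,\epsilon}(X)) \leq a + \nu + K\epsilon/T$.

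First I would use the first assumption to replace the dissimilarity $\overline{sim}$ by the metric $d$. For any $x$ with $\overline{sim}(x, M_C(X)) \leq \epsilon$, taking the infimum over $x' \in M_C(X)$ in $\overline{sim}(x,x') \geq T\, d(x,x')$ gives $\overline{sim}(x, M_C(X)) \geq T\, d(x, M_C(X))$, hence $d(x, M_C(X)) \leq \epsilon/T$. Thus $E_{C,\epsilon}(X)$ is contained in the metric neighborhood $N := \{x \mid d(x, M_C(X)) \leq \epsilon/T\}$, and so $\mathcal{P}(E_{C,\epsilon}(X)) \leq \mathcal{P}(N)$.

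Next I would bound the density on $N$ using the low-density and Lipschitz assumptions. For any $x \in N$, choosing points $x_0 \in M_C(X)$ whose distance to $x$ approaches $d(x, M_C(X))$, the Lipschitz bound gives $p(x) \leq p(x_0) + K\, d(x,x_0)$; passing to the limit and using $p(x_0) \leq \nu$ yields $p(x) \leq \nu + K\epsilon/T$ uniformly on $N$. I would then integrate, splitting $N$ into $M_C(X)$ and its complement: the first piece contributes exactly $\mathcal{P}(M_C(X)) = a$, while the second is at most $(\nu + K\epsilon/T)\, vol(N \setminus M_C(X)) \leq (\nu + K\epsilon/T)\, vol(X) = \nu + K\epsilon/T$, using $vol(X) = 1$. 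Adding the two pieces gives the claimed bound, and hence the stated robustness.

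The infimum manipulations are routine; the step I would watch most carefully is the uniform density bound on $N$, since it rests on the limiting argument over near-minimizing $x_0 \in M_C(X)$, together with the final volume estimate, which crudely replaces $vol(N \setminus M_C(X))$ by the total volume $1$. This volume bound is the weakest link: it is what keeps the result clean but loose, and any sharpening of the theorem would need to control the actual volume of the $\epsilon/T$-neighborhood of $M_C(X)$ rather than bounding it by $vol(X)$.
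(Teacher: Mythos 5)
Your proof is correct and follows what is essentially the intended (and, given the form of the bound, essentially forced) argument: convert the $\overline{sim}$-ball to a $d$-ball of radius $\epsilon/T$ via the first assumption, bound the density on that neighborhood by $\nu + K\epsilon/T$ via the low-density and Lipschitz assumptions, and split the exploitable region into $M_C(X)$ (mass exactly $a$) plus the remainder, whose mass is at most the uniform density bound times $vol(X)=1$. The limiting argument over near-minimizers in $M_C(X)$ and the crude volume bound you flag are both sound, so there is no gap.
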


The first two conditions in the above result are quite natural. In simple words, the first two conditions says dissimilarity increases with distance (high $T$) and the regions that the output classifier predicts badly has low amount of data in the data-set (low $\nu$). 

However, the final condition may not be satisfied in many natural settings. This condition states that the data distribution must not change abruptly (low $K$). This is required as the natural behavior of most classifiers is to predict bad in a low data density region and if this region is near a high data density region, the adversary can successfully modify the data points in the high density region causing loss of robustness. But in high dimensional spaces, data distribution is quite likely to be not distributed smoothly with many pockets or sub-spaces of zero density as pointed out in a recent experimental work~\cite{tramer2017space}. 
Thus, data distribution is an important contributing factor that determines robustness.

\subsection{ALN}
Our goal is to train a neural network ALN to produce samples in the misclassification region of a given neural network based classifier. The ALN acts on a data point $x$ producing $x'$. Thus, we choose the following loss function for the ALN that 
takes into account the output for the given classifier: 
$$\alpha \overline{sim}(x, x') - \overline{opsim}(Cat(y_x), C_p(x')) \; ,$$
The input dissimilarity term in the loss function aims to produce data points $x'$ that are similar to the original input $x$ while the output dissimilarity term aims to 
maximize the difference between the true label of $x$ and prediction of $C$ on $x'$. The $\alpha$ is a weight that is tuned through a simple search. Observe that this loss function is general and can be used with any classifier (by inferring $C_p$ from $C$ in case of specific non neural network based classifiers). 
For the image classification problem we use the $l_2$ distance for $\overline{sim}$. For $\overline{opsim}$ we tried a number of functions, but the best performance was for the $l_1$ loss $||Cat(y_x) - C_p(x')||_1$.

Note that an alternate loss function is possible that does not use the actual label $y_x$ of $x$, rather  using $C_p(x)$. This would also work assuming that the classifier is good; for poor classifiers a lot of the data points are as it is mis-classified and hence adversarial example generation is interesting only for good classifiers. Further, this choice would allow using unlabeled data for conducting such an attack, making attack easier for an attacker. However, in our experiments we use the more powerful attack using the labels.


Next, we provide a formal intuition of what ALN actually achieves. 
Any adversarial example generation can be seen as a distribution transformer $F$ such that acting on the data distribution $\mathcal{P}$ the resultant distribution $F(\mathcal{P})$ has support mostly limited to $M_C(X)$. The support may not completely limited to $M_C(X)$ as the attacks are never 100\% effective. Also, attacks in literature aim to find points in $M_C(X)$ that are close to given images in the original dataset. ALN is essentially a neural network representation of such a function $F$ against a given classifier $C$. See Figure~\ref{fig:intuition}B for an illustration. We return to this interpretation in the next sub-section to provide a formal intuition about the DLN defense.

Lastly, we show in our experiments that ALN produces adversarial examples whose perturbations are roughly of the same order as the prior attack FGSM. We categorize these as high perturbation attacks. On the other the attack CW produces adversarial perturbation with very small perturbations, we call such attacks low perturbation attacks. As mentioned earlier, we provide two separate defenses for these two types of attacks. Both these defense can be used simultaneously to defend against both types of attacks.

\begin{figure}[t]
     \centering \includegraphics[scale=0.25]{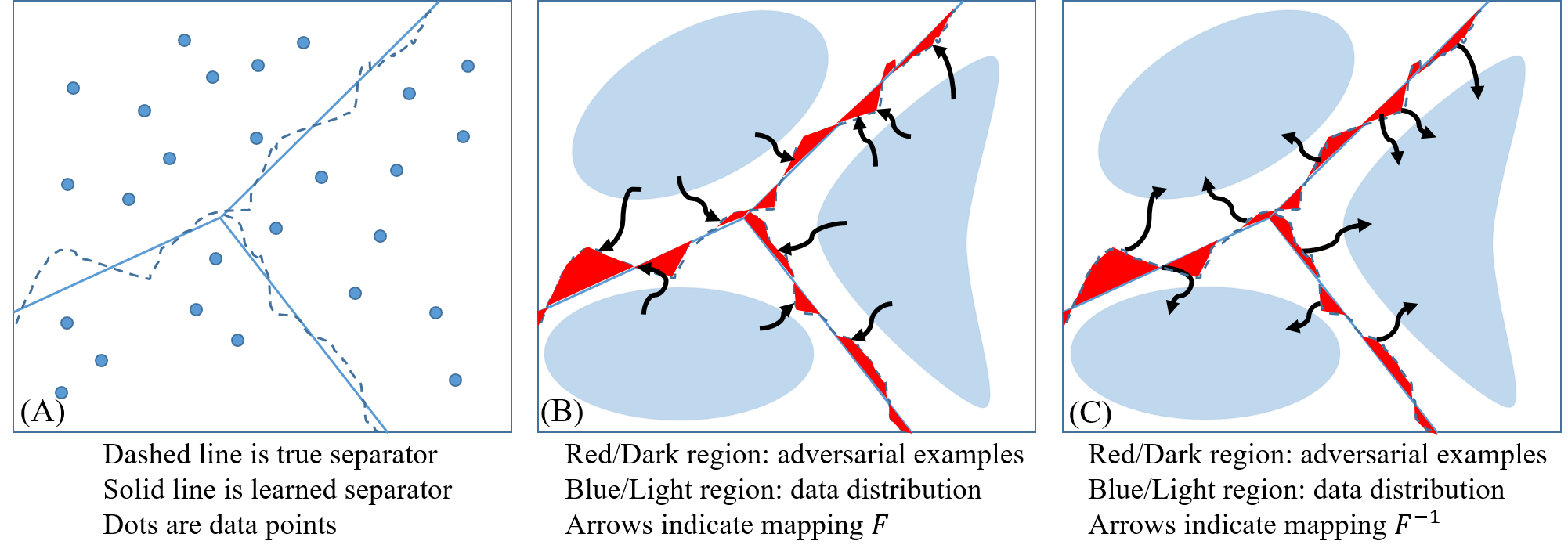}
     \caption{Intuition behind ALN and DLN. (A) shows a linear classifier (low capacity) is not able to accurately model a non-linear boundary. (B) shows the ALN as the distribution  mapping function $F$. (C) shows that DLN does the reverse mapping of ALN.}
     \label{fig:intuition}
 \end{figure}

\subsection{DLN}
Our first defense approach is to insert a neural network DLN $D$ between the input and classifier so that $D$ sanitizes the input enabling the classifier to correctly classify the input. Each data point for training DLN has three parts: $x'$ is the image to sanitize, $x$ is the expected output and $y_x$ is the correct label of $x$. $x$ are images from the provided dataset, and there are two possibilities for $x'$: (1) $x' = x$ so that DLN attempts to satisfy $C(D(x)) = y_x$, even if $C(x) \neq y_x$, and (2) $x' = A(x)$ so that DLN undoes the attack and make the classifier $C$ correctly classify $x'$.

We formulate a loss function for DLN that, similar to ALN, has two terms: $\overline{sim}(x, D(x'))$ that aims to produce output $D(x')$ close to  $x$ and $\overline{opsim}(Cat(y_x),C_p(D(x')))$ that aims to make the classifier output on $D(x')$ be the same as $y_x$. Thus, the loss function is
$$
\alpha \overline{sim}(x, D(x')) + \overline{opsim}(Cat(y_x),C_p(D(x'))) \; .
$$
In this paper we only use $\alpha = 1$.
Note that the attack $A$ is used as a black box here to generate training data and is not a part of the loss function. After training the DLN, our new classifier is $C'$ which is $C$ prepended by the DLN. The working of DLN can be interpreted as an inverse map $F^{-1}$ for the mapping $F$ induced by the attack $A$. See Figure~\ref{fig:intuition}C for an illustration. For the image classification problem we use the $l_2$ distance for $\overline{sim}$. For $\overline{opsim}$ we tried a number of functions, but the best performance was for the cross-entropy loss $H(Cat(y_x), C_p(D(x'))$. 

An important point to note is that the original classifier $C$ is unchanged. What this ensures is the mis-classification space $M_C(X)$ does not change and allows us to prove an important result about $C'$ under certain assumptions. For the sake of this result, we assume that attacks $A$ generate adversarial examples in a sub region $M_{C,A}(X) \subset M_C(X)$. We also assume a good DLN $D$, that is, $C(D(x))$ is correct for a non-empty subset $Z \subset M_{C,A}(X)$ and $C(D(x))$ continues to be correct for all $x \notin M_{C}(X)$. Then, we prove 
\begin{lemma} \label{lemma1}
Assuming $M_{C,A}(X) \subset M_C(X)$, DLN is good as defined above, and $M_{C,A}(X) \neq \phi$, then
$M_{C'}(X) \subset M_C(X)$
\end{lemma}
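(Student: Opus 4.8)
The plan is to prove the two-sided claim implicit in $M_{C'}(X) \subset M_C(X)$ --- set inclusion together with strictness --- by direct appeal to the two goodness properties of $D$, with no machinery beyond the definitions. The structural fact I would lean on throughout is that $C$ itself is never modified, so $M_C(X)$ is a fixed reference set and $C'$ differs from $C$ only through the input transformation, namely $C'(x) = C(D(x))$. Consequently $x \in M_{C'}(X)$ is equivalent to $C(D(x)) \neq y_x$, and the whole argument reduces to tracking exactly when $C(D(x))$ is correct.

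First I would establish $M_{C'}(X) \subseteq M_C(X)$ by contraposition. Take any $x \notin M_C(X)$, i.e.\ a point already classified correctly by $C$. The second goodness property states precisely that $C(D(x))$ remains correct for every such $x$, hence $C'(x) = C(D(x)) = y_x$ and $x \notin M_{C'}(X)$. The contrapositive $x \in M_{C'}(X) \Rightarrow x \in M_C(X)$ gives the inclusion.

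Next I would show the inclusion is strict by exhibiting a point of $M_C(X)$ that escapes $M_{C'}(X)$. The first goodness property supplies a non-empty $Z \subset M_{C,A}(X)$ on which $C(D(\cdot))$ is correct; combined with the chain $Z \subset M_{C,A}(X) \subset M_C(X)$, every $x \in Z$ is misclassified by $C$ (so $x \in M_C(X)$) yet correctly classified by $C'$ (so $x \notin M_{C'}(X)$). Since $Z$ is non-empty, which is consistent with the standing assumption $M_{C,A}(X) \neq \phi$, the difference $M_C(X) \setminus M_{C'}(X)$ is non-empty and the inclusion is proper.

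I do not anticipate a genuine analytical obstacle: the result is a careful unpacking of definitions, and the only thing to be vigilant about is interpreting the two goodness conditions correctly --- correctness \emph{preserved} off $M_C(X)$ versus correctness \emph{newly achieved} on the slice $Z$ of the attack region --- and never conflating $M_{C,A}(X)$ with the full $M_C(X)$. The one caveat worth flagging is that the deduction is purely set-theoretic and inherits whatever idealization is baked into the phrase \emph{good DLN}; the substantive difficulty lies in the empirical claim, argued elsewhere in the paper, that such a $D$ can actually be trained, not in the logical step itself.
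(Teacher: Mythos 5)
Your proof is correct and takes essentially the same route as the paper's: the inclusion $M_{C'}(X) \subseteq M_C(X)$ follows from the second goodness property (correctness preserved outside $M_C(X)$), and strictness follows from the non-empty sanitized set $Z \subset M_{C,A}(X) \subset M_C(X)$, whose points lie in $M_C(X)$ but escape $M_{C'}(X)$. If anything, your explicit exhibition of witness points in $M_C(X) \setminus M_{C'}(X)$ states the strictness step more cleanly than the paper's own phrasing.
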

\begin{proof}
Since DLN does not decrease the performance of $C$ on points outside $M_C(X)$, $C'$'s prediction on inputs outside $M_{C}(X)$ is correct, hence $M_{C'}(X) \subseteq M_{C}(X)$.
Any data point not mis-classified by a classifier does not belong to its mis-classification space. Good sanitization by DLN makes $C'$ predict correctly on $Z \subset M_{C,A}(X)$, which makes $M_{C,A}(X) \cap M_{C'}(X) \subset M_{C,A}(X)$. Thus, we can claim the result in the lemma statement.
\end{proof}
While the above proof is under ideal assumptions, it provides an intuition to how the defense works. Namely, the reduction in the adversarial exploitable space makes the new classifier more robust (see robustness properties earlier). This also motivates the generalization of this technique to multiple attacks presented in the next sub-section.


\subsection{Repeated DLN Against Multiple Attacks} \label{repeatDLN}
The above DLN can be naturally extended to multiple attacks, say $A_1, \ldots, A_n$. The only change required is to feed in all possible adversarial examples $A_1(x)$'s$, \ldots,$$A_n(x)$'s. It is straightforward to see that under assumptions of Lemma~\ref{lemma1} for all the attacks, the resultant classifier $C'$ has an adversarial example space $M_{C'}(X)$ that removes subsets of $M_{C,A_i}(X)$ for all $A_i \in \mathcal{A}$ from $M_{C}(X)$. This provides, at least theoretically under ideal assumptions, a monotonic robustness improvement property with increasing number of attacks for the DLN based approach. 
In fact, if all the attacks combined act as a generator for all the points in $M_C(X)$, then given enough data and perfect sanitization the resultant classifier $C'$ tends towards achieving $M_{C'}(X) = \phi$ which essentially would make $C'$ a perfect classifier.
Perfect classifiers have no adversarial examples. 

However, attacks rarely explore all of the mis-classified space, which is why new attacks have defeated prior defense techniques. Even for our approach, attacks successfully work against the DLN that has been trained only once (accuracy numbers are in Experiments). However, DLN allows for easy retraining (without retraining the classifier) as follows: repeatedly attack and re-learn a DLN in rounds, that is, conduct an attack on the classifier obtained in every round and train a DLN in a round using the attacked training data from all the previous rounds and the original training data. More formally, at round $i$ our training data consists of $i$ copies of original training data and $i$ instances of attacked training data from previous rounds. Observe that we add copies of the original training data in each round, this is because the adversarial data swamps out the original training data and accuracy suffers in regions where the original training data is distributed. See Figure~\ref{fig:dlnvsnac} for an illustration of how repeated DLN works.

\begin{figure}[t]
     \centering \includegraphics[scale=0.35]{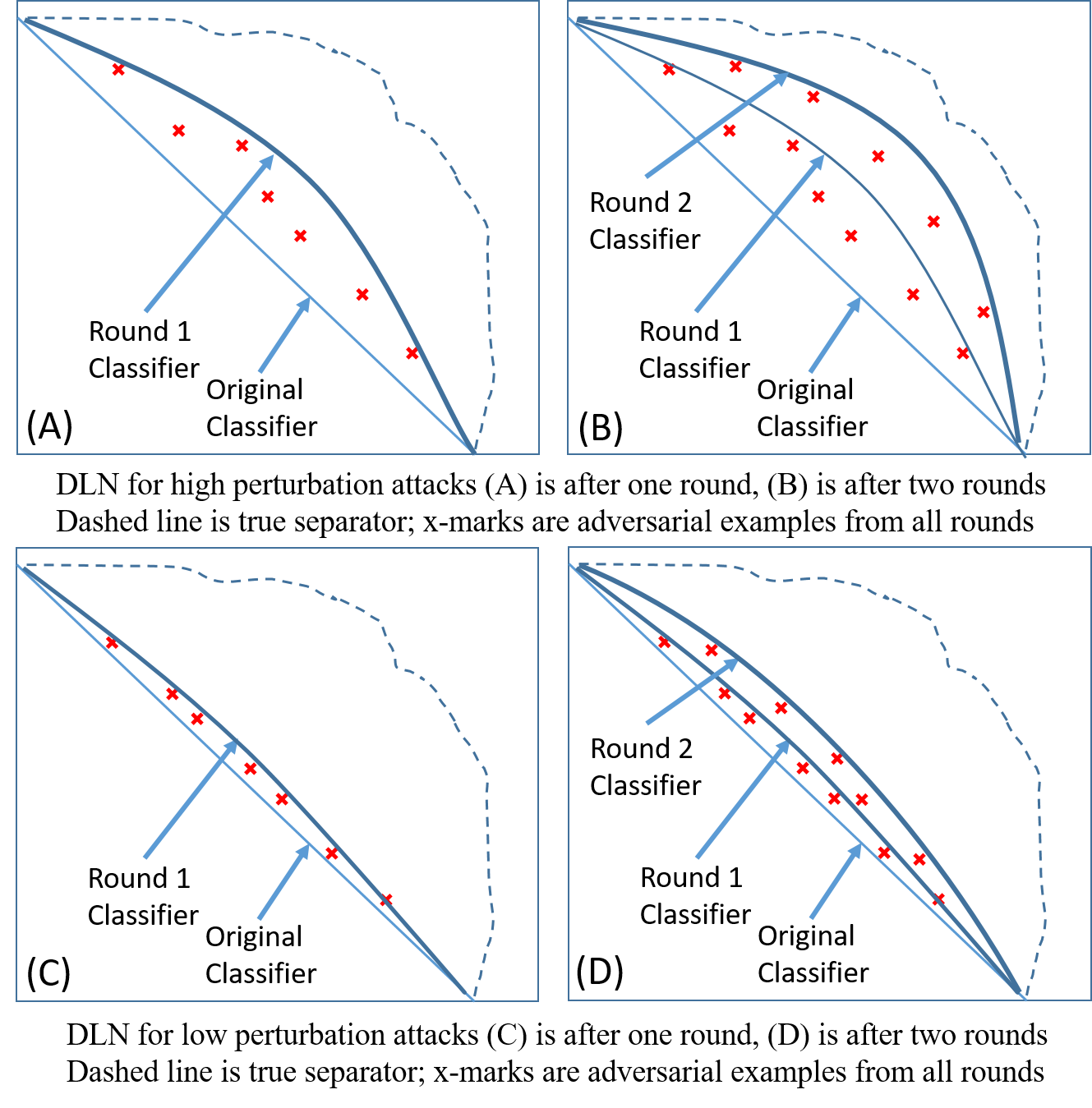}
     \caption{Intuition behind working of repeated DLN against high and low perturbation attacks. (A),(B) shows a high perturbation attack causes a faster improvement in resultant classifier. Further, beyond some rounds the attack does not work as it can only find adversarial examples with high perturbation. (C),(D) shows a low perturbation attack causes a slow improvement in resultant classifier.}
     \label{fig:dlnvsnac}
 \end{figure}

The following result provides formal intuition for this approach:
\begin{lemma}
Assume the following conditions hold for every round $i$: $M_{C_{i-1},A}(X) \subset  M_{C_{i-1}}(X)$ and the DLN $D_i$ has good memory, which means that given there exists a largest set $Z_i \subset M_{C_{i-1},A}(X)$ which the DLN $D_i$ correctly sanitizes so that $C(D_i(x))$ is correct for all $x \in Z_i $ then $Z_{i-1} \subset Z_i$. That is DLN $D_i$ can correctly sanitize data points that the previous round DLN did plus possibly more data points.
Further, $C(D_i(x))$ continues to be correct for all $x \notin M_{C}(X)$.
Then the classifier $C_n$ after $n$ rounds satisfies $M_{C_n}(X) \subset M_{C_{n-1}}(X)$.
\end{lemma}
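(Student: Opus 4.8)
The plan is to reduce the statement to the single-round argument of Lemma~\ref{lemma1} and then lift it across rounds using the good-memory hypothesis $Z_{i-1} \subset Z_i$ as the inductive glue. Throughout I write $C_i = C \circ D_i$ for the classifier obtained after round $i$ (recall the base classifier $C$ is never retrained), so that ``$C(D_i(x))$ is correct'' is exactly the statement $x \notin M_{C_i}(X)$. The object I would track is the \emph{rescued set} $R_i := M_C(X) \setminus M_{C_i}(X)$, i.e.\ the points that $C$ alone mis-classifies but that $D_i$ sanitizes into a correct prediction. With this notation, proving $M_{C_n}(X) \subset M_{C_{n-1}}(X)$ is equivalent to proving $R_{n-1} \subseteq R_n$ (the rescued region only grows) together with whatever strictness is available.

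First I would dispose of the region outside $M_C(X)$ exactly as in Lemma~\ref{lemma1}. The third hypothesis---that $C(D_i(x))$ is correct for every $x \notin M_C(X)$, applied at rounds $n-1$ and $n$ separately---gives $M_{C_{n-1}}(X) \subseteq M_C(X)$ and $M_{C_n}(X) \subseteq M_C(X)$ with no induction needed. Hence neither DLN introduces a new error outside $M_C(X)$, and the entire comparison between $C_n$ and $C_{n-1}$ localizes to inside $M_C(X)$, where it is governed solely by the rescued sets $R_{n-1}$ and $R_n$.

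The core inclusion $R_{n-1} \subseteq R_n$ is where good memory enters. By definition $Z_i \subseteq R_i$, since every point $D_i$ correctly sanitizes is rescued, and the first hypothesis $M_{C_{i-1},A}(X) \subseteq M_{C_{i-1}}(X) \subseteq M_C(X)$ places all the attack regions, and hence all the $Z_i$, inside $M_C(X)$ so that they are comparable. Good memory then yields the monotone chain $Z_1 \subseteq Z_2 \subseteq \cdots \subseteq Z_n$, and I would use the top link $Z_{n-1} \subseteq Z_n$ to push the rescued region forward. For strictness I would follow Lemma~\ref{lemma1}: if $Z_n$ strictly enlarges $Z_{n-1}$ there is a newly rescued point and $R_{n-1} \subsetneq R_n$, giving $M_{C_n}(X) \subset M_{C_{n-1}}(X)$; if good memory only guarantees $Z_{n-1} \subseteq Z_n$, I would read the lemma's $\subset$ as $\subseteq$, which is already the non-decreasing (monotone robustness) property the section is after.

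The step I expect to be the main obstacle is precisely the identification of $R_{n-1} \subseteq R_n$ rather than merely $Z_{n-1} \subseteq Z_n$. The hypothesis controls the rescued set only \emph{within the attack region} $M_{C_{i-1},A}(X)$, whereas $R_i$ is the rescued set over all of $M_C(X)$; a point that $D_{n-1}$ happens to sanitize but that lies outside the attack region used to define $Z_{n-1}$ is not, on its face, preserved by $D_n$. Bridging this gap is exactly what ``good memory'' must be read to assert---that $D_n$, trained on the accumulated adversarial plus original data, retains every correct sanitization of $D_{n-1}$---and I would make the reading explicit in one of two ways: either strengthen the hypothesis so that $Z_i$ denotes the full rescued set $R_i$ (so the chain $Z_{i-1}\subseteq Z_i$ is literally $R_{i-1}\subseteq R_i$), or assume $D_{n-1}$ rescues nothing outside its attack region, so that $R_{n-1} = Z_{n-1}$ and the inclusion closes cleanly as $R_{n-1} = Z_{n-1} \subseteq Z_n \subseteq R_n$.
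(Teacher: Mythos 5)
Your proposal follows essentially the same route as the paper's proof: first use the hypothesis that $C(D_i(x))$ is correct for all $x \notin M_C(X)$ to localize the whole comparison inside $M_C(X)$ (the paper does this step by ``arguing similarly to Lemma~\ref{lemma1}''), and then let the good-memory chain on the $Z_i$ drive the monotonicity conclusion.

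The one place you diverge is instructive. The paper's proof simply asserts the equality $M_{C_n}(X) = M_C(X) \setminus Z_n$, citing ``the largest such set condition on $Z_i$''; but that condition only makes $Z_n$ maximal \emph{within} the attack region $M_{C_{n-1},A}(X)$, so the asserted equality silently assumes that $D_n$ rescues nothing in $M_C(X) \setminus M_{C_{n-1},A}(X)$ --- which is precisely your second proposed repair, $R_n = Z_n$. In other words, the ``main obstacle'' you flag (the hypothesis controls the sanitized set $Z_i$ only on the attack region, not the full rescued set $R_i$) is a genuine gap in the lemma as stated, and the paper's own proof does not close it any better than you do; it papers over it with the asserted equality. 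Your counter-scenario is correct: without one of the two strengthenings (taking $Z_i$ to be the full rescued set $R_i$ in the good-memory hypothesis, or assuming the DLN rescues nothing outside its attack region), a point rescued by $D_{n-1}$ outside $M_{C_{n-2},A}(X)$ could be dropped by $D_n$, and then $M_{C_n}(X) \subseteq M_{C_{n-1}}(X)$ fails. Under either repair, the lemma follows by exactly your chain $R_{n-1} = Z_{n-1} \subseteq Z_n \subseteq R_n$, with the same caveat you note about reading the paper's $\subset$ as $\subseteq$ unless the good-memory inclusion is strict.
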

\begin{proof}
Arguing similarly to Lemma~\ref{lemma1} we can show that $M_{C_n}(X) \subseteq M_{C}(X)$ due to the correct classification outside of $M_{C}(X)$. 
Further, it is easily inferable that $M_{C_n}(X) = M_{C}(X) \backslash Z_n$ given $Z_n$ is a subset of $M_C(X)$ and given the largest such set condition on $Z_i$. Then, the good memory property leads to the required result.
\end{proof}
The attack-defense competition technique is somewhat akin to GANs~\cite{goodfellow2014generative}. However, there is a big difference, since in every round the dataset used to train the DLN grows. Practically, this requires DLN to have a large capacity in order to be effective; also depending on the capacity and the size of dataset over or under fitting problems could arise, which needs to be taken care of in practice. Also, the training become more expensive over rounds with increasing data size. In particular, low perturbation attacks are not defeated with few rounds. We do observe improvement with the low perturbation CW attack over rounds, but the improvement is very small, as represented visually in Figure~\ref{fig:dlnvsnac}. The main reason is that low perturbation attacks only exposes a very small volume of misclassified space, thus, it would require a huge number of rounds for repeated DLN to reduce the mis-classified space to such a small volume that cannot be attacked. This motivates our next approach of noise augmented classifier.

\subsection{NAC}
Figure~\ref{fig:dlnvsnac} also provides a hint on how to overcome low perturbation attacks. In order to achieve low perturbation, such attacks rely a lot on the exact classifier boundary. Thus, masking the classifier boundary can fool low perturbation attacks. We achieve this by adding a small noise to the logits of the neural network calling the resultant classifier a noise augmented classifier (NAC). This noise should be small enough that it does not affect the classification of original data points by much, but is able to  mis-lead the low perturbation attack. Also, following our intuition NAC should not provide any defense against high perturbation attacks, which we indeed observe in our experiments. However, observe that DLN and NAC can be used simultaneously, thus, providing benefits of both defense which we show in our experiments. 

Further, a natural idea to bypass the defense provided by NAC is to take the average of multiple logit outputs for the same given input image (to cancel the randomness) and then use the average logits as the logits required for the CW attack. We show experimentally that this idea does not work effectively even after averaging over many logit outputs.

\section{Experiments for Attacks}
\label{section4}
All our experiments, including for DLN and NAC, were conducted using the Keras framework on a NVIDIA K40 GPU.  The learning problem can be solved using any gradient-based optimizer. In our case, we used Adam with learning rate 0.0002. We use two well-known datasets: MNIST digits and CIFAR-10 colored images.

We consider two classifiers one for MNIST and one for CIFAR-10: we call them $C_M$ and $C_C$. These classifiers are variants of well-known architectures that achieved state-of-the-art performances on their respective datasets. As stated earlier, we consider three attacks: ALN, FGSM and CW. CW has been referred to in the literature~\cite{xu2017feature} as one of the best attacks till date (at the time of writing of this paper), while FGSM runs extremely fast. For the autoencoder we use a fourteen hidden layer convolutional architecture. Our code is publicly available, but the github link is elided in this submitted version for the review process.
For our experiments we pre-process all the images so that the pixels values lie between $[-0.5, 0.5]$, so all components (attacks, autoencoders, classifiers) work in space $[-0.5, 0.5]$. We use FGSM with values of $0.03$ and $0.01$ for its parameter $\epsilon$ on MNIST and CIFAR, respectively.

Observe that all these attacks work against a given classifier $C$, thus, we use the notation $A(C,.)$ to denote the attack $A$ acting on an image $x$ to produce the adversarial example $A(C,x)$ ($A$ can be any of the three attacks). $A(C,Z)$ denotes the set of adversarial examples $\{A(C,x)~|~ x \in Z\}$.
We report accuracies on various test sets: (1) original test dataset ($OTD$): these images are the original test dataset from the dataset under consideration, (2) $A(C,OTD)$ is the adversarially perturbed dataset using attack $A$ against classifier $C$, for example, this could be FGSM$({C_M},OTD)$. We also report distortion numbers as has been done in literature~\cite{carlini2017towards}. Distortion measures how much perturbation on average was added by an attack and is meant to capture how visually similar the image is to the original image. Distortion is measured as the average over all test images of the $l_2$ distance between the original and perturbed image.



\textbf{Results}:
Untargeted attacks refers to attack that aim to produce mis-classification but not with the goal of making the classifier output a specific label. Targeted attacks aim to produce an adversarial example that gets classified as a given class $y$. It is also possible to modify ALN to perform targeted attacks. This is achieved by modifying the loss function to use a positive $\overline{opsim}$ term, like the DLN loss function, but using the target class label $y$ instead of the original class label $y_x$ in the $\overline{opsim}$ term. Then, we can perform an ALN attack in two ways: ALNU uses the ALN loss function as stated and ALNT constructs a targeted attack per class label differing from the original label and chooses the one with least distortion. Figure~\ref{fig:targeted} shows an example of targeted attack with different target labels. Table~\ref{untargetmnist} shows this approach for MNIST with the targeted ALNT version performance better than other attacks.

\begin{figure}[t]
    \centering \includegraphics[width=230px,height=32px]{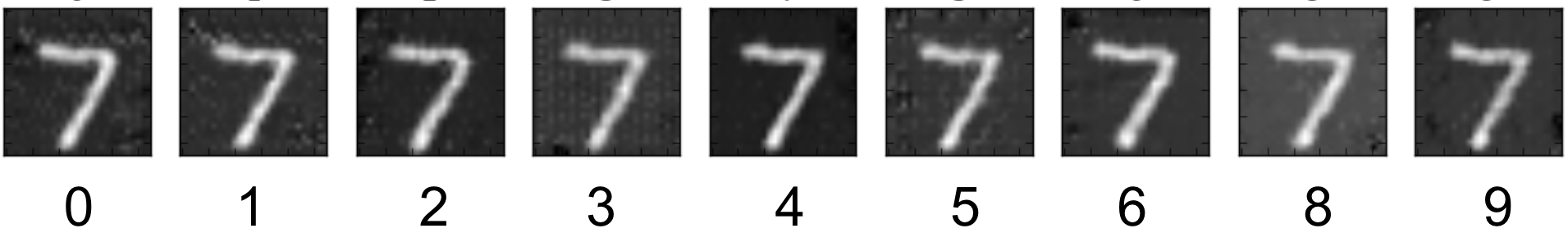}
    \caption{Targeted attacks by ALN: target class on bottom} \label{fig:targeted}
\end{figure}

\begin{table}[h!]
\centering
\begin{tabular}{c c c}
\toprule
Test data type & Accuracy & Distortion\\
\midrule
$OTD$ & 99.45 \% & $-$\\
FGSM$({C_M},OTD)$ & 0.72 \% & 14.99 \\
CW$({C_M},OTD)$ & 0.03 \% & 1.51 \\
ALNU$({C_M},OTD)$ & 1.65 \% & 4.43 \\
ALNT$({C_M},OTD)$ & 0.0 \% & 4.34 \\
\bottomrule
\end{tabular}
\caption{Attacks on MNIST Dataset} \label{untargetmnist}
\end{table}

Table~\ref{untargetcifar} shows the result of untargeted attacks using ALN, FGSM and CW on the CIFAR-10 dataset. We can see that ALN, just like FGSM, produces slightly higher adversarial accuracy for MNIST, but the distortion of FGSM is much higher. This does result in a large difference in visual quality of the adversarial examples produced---see Figure~\ref{fig:untargetedcifar} for randomly chosen 25 perturbed images using ALN and Figure~\ref{fig:fgsm_cifar} for randomly chosen 25 perturbed images using FGSM. Also, we can attribute the higher distortion of ALN for CIFAR as compared to MNIST partially to the CIFAR being a higher dimensional space problems and the same capacity of the autoencoder we used for CIFAR and MNIST.

\begin{table}[h!]
\centering
\begin{tabular}{c c c}
\toprule
Test data type & Accuracy & Distortion\\
\midrule
$OTD$ & 84.59 \% & $-$\\
FGSM$({C_C},OTD)$ & 4.21 \% & 10.03 \\
CW$({C_C},OTD)$ & 0 \% & 0.18 \\
ALNU$({C_C},OTD)$ & 6.16 \% & 2.57 \\
ALNT$({C_C},OTD)$ & 8.21 \% & 3.01 \\
\bottomrule
\end{tabular}
\caption{Attacks on CIFAR-10 Dataset} \label{untargetcifar}
\end{table}

\begin{figure}[t]
    \centering \includegraphics[width=150px,height=150px]{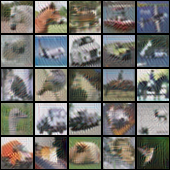}
    \caption{Untargeted attack by ALN for CIFAR-10} \label{fig:untargetedcifar}
\end{figure}

\begin{figure}[t]
    \centering \includegraphics[width=150px,height=150px]{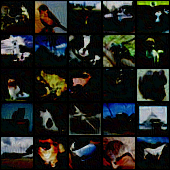}
    \caption{Attack using FGSM for CIFAR-10} \label{fig:fgsm_cifar}
\end{figure}

\section{Experiments for Defense}
\label{section5}
For defense, we denote the new classifier using the DLN trained against any attack $A$ or NAC modified classifier as $C'_{M}$ (for MNIST) or $C'_{C}$ (for CIFAR). Also, we test accuracies on test data adversarially perturbed by attacks against the new classifiers, for example, following our convention one such dataset would be denoted as $A(C'_{M},OTD)$. For defense we focus on defending against known attacks in literature.

\subsection{DLN Defense Against Single Attack}

Table~\ref{mnistsingleattack} shows the results when the DLN is trained to defense against specific attacks using MNIST dataset to yield a new classifier $C'_M$. Along expected lines, the accuracy on $OTD$ drops slightly for all the cases where DLN is trained against FGSM or CW.
However, the new classifier $C'_M$ is able to perform quite good on adversarial examples that were produced by any attack on the original classifier $C_M$, for example, $C'_M$ gets an accuracy of $95\%$ of the adversarial examples CW$({C_M},OTD)$ that was produced by CW attacking $C_M$. Lastly, when attacked again the new classifier $C'_M$ is still not resilient to attacks as shown by the low accuracies for FGSM$({C'_M},OTD)$ and CW$({C'_M},OTD)$. 

One number that stands out is the success of the new classifier $C'_M$ in correctly classifying the adversarial example generated by CW for the original classifier $C_M$. This supports our hypothesis that CW is very sensitive to the exact classifier boundary, and a newer classifier with a slightly different boundary is able to defeat prior adversarial examples. Of course, CW is is able to attack $C'_M$ against quite successfully, which we later show can  be defended by our NAC approach. For FGSM, we show in later sections that the performance of the classifier greatly improves when DLN is repeatedly trained against FGSM---revealing that the DLN approach is flexible enough to keep on improving its performance against high perturbation attacks.


\begin{table}[h!]
\centering
\begin{tabular}{p{1cm} c c c}
\toprule
DLN Trained & Test data type & Accuracy & Distortion\\
\midrule
FGSM& $OTD$ & 96.77 \% & $-$\\
FGSM & FGSM$({C_M},OTD)$ & 88.5 \% & 4.55 \\
FGSM & FGSM$({C'_M},OTD)$ & 13.75 \% & 6.98 \\
CW& $OTD$ & 98.6 \% & $-$\\
CW& CW$({C_M},OTD)$ & 95.42 \% & 5.77 \\
CW& CW$({C'_M},OTD)$ & 0.14 \% & 3.5 \\
\bottomrule
\end{tabular}
\caption{New DLN prepended classifier $C'_M$ for MNIST} \label{mnistsingleattack}
\end{table}


\subsection{Repeated DLN}
In this section, we run DLN repeatedly as described earlier in Section~\ref{repeatDLN}. We cut off the experiments when a single round took more than 24 hours to solve. We show the results for MNIST in Table~\ref{repeattraining} showing a clearly increasing trend in accuracy on adversarial examples produced by FGSM attacking the newer classifier, revealing increasing robustness. For CIFAR, the approach becomes too computationally expensive within two rounds. Thus, while the DLN approach is promising, as stated earlier it is computationally expensive, more so with larger and complex data sets. Further, as stated earlier running DLN against low perturbation attacks like CW does not show much improvement. However, we tackle that next using the NAC defense approach.

\begin{table}[h!]
\centering
\begin{tabular}{p{1cm}p{1.5cm}p{2.5cm}p{1.5cm}}
\toprule
Round & Acc. $OTD$ & Acc. FGSM$({C_i},OTD)$ & Distortion \\
\midrule
0 & 99.36 \% & 0.72  \% & 14.99 \\
1 & 97.70 \% & 13.70 \% & 13.63 \\
2 & 97.61 \% & 24.86 \% & 14.58 \\
3 & 97.95 \% & 43.39 \% & 14.73 \\
4 & 97.79 \% & 52.88 \% & 14.57 \\
5 & 97.77 \% & 56.57 \% & 14.52 \\
\bottomrule
\end{tabular}
\caption{Classifier trained repeatedly against FGSM for MNIST} \label{repeattraining}
\end{table}

\subsection{NAC defense}

Recall that the NAC defense works by adding noise to the logits layer of the neural network classifier to produce a new classifier $C'_M$ for MNIST and $C'_C$ for CIFAR. We use Gaussian noise with 0 mean and variance 1.
In this section, we show that the NAC defense is able to produce classifiers that are resilient to CW attack. Further, the new classifier's accuracy on the original test data-set is nearly unchanged. This can be seen in Table~\ref{nacmnist}. The second line in that table shows that the CW attack completely fails as the accuracy on the adversarial examples in $93\%$. However, it can also be observed that the new classifier is not resilient to attack by FGSM, as shown by  the third line in that table. This follows the intuition we provided in Figure~\ref{fig:dlnvsnac}. For CIFAR, Table~\ref{naccifar} shows that NAC is able to overcome CW to a large extent.

\begin{table}[h!]
\centering
\begin{tabular}{p{1cm}p{2.5cm}p{1.5cm}p{1.5cm}}
\toprule
Attack & Test data type & Accuracy & Distortion \\
\midrule
- & $OTD$ & 99.36  \% & - \\
CW & CW$({C'_M},OTD)$ & 93.60 \% & 1.49 \\
FGSM & FGSM$({C'_M},OTD)$ & 0.74 \% & 14.99 \\
\bottomrule
\end{tabular}
\caption{Accuracy of NAC Classifier $C'_M$ for MNIST} \label{nacmnist}
\end{table}

\begin{table}[h!]
\centering
\begin{tabular}{p{1cm}p{2.5cm}p{1.5cm}p{1.5cm}}
\toprule
Attack & Test data type & Accuracy & Distortion \\
\midrule
- & $OTD$ & 84.67  \% & - \\
CW & CW$({C'_M},OTD)$ & 77.70 \% & 0.17 \\
FGSM & FGSM$({C'_M},OTD)$ & 4.19 \% & 10.04 \\
\bottomrule
\end{tabular}
\caption{Accuracy of NAC Classifier $C'_M$ for CIFAR-10} \label{naccifar}
\end{table}

As stated earlier, a natural idea to attack NAC would be to query an image $n$ times and then average the logits before using it for the CW attack. This augmented attack does make CW more effective but not by much. Table~\ref{nacattackedmnist} shows that the accuracy on the adversarial example generated for $C'_M$ remains high. Moreover, more queries make it more difficult to conduct the CW attack in practice (as the adversary may be query limited), while also causing a small increase (1\% with 5000 sample) in the already high runtime of CW.

\begin{table}[h!]
\centering
\begin{tabular}{c c c}
\toprule
$n$ & Adversarial accuracy & Distortion \\
\midrule
500 &  95.14  \% & 1.51 \\
5000 &  82.07 \% & 1.51 \\
\bottomrule
\end{tabular}
\caption{Accuracy of NAC Classifier $C'_M$ against improved CW for MNIST} \label{nacattackedmnist}
\end{table}

\subsection{Defense Against Multiple Attacks}

Finally, we show that DLN and NAC can work together. We show this by presenting the accuracy on the adversarial example generated in each round of DLN repetition when the classifier $C_i$ after each round is augmented with NAC and attacked by FGSM and CW both. See Table ~\ref{repeattrainingNAC}. One observation is that NAC's performance decreases slightly over rounds stabilizing at 79\%, while the accuracy for original test set and FGSM perturbed test set stays almost exactly same as Table~\ref{repeattraining}.

\begin{table}[h!]
\centering
\begin{tabular}{p{1cm}p{1.5cm}p{2.5cm}p{1.5cm}}
\toprule
Round & Acc. $OTD$ & Acc. FGSM$({C_i},OTD)$ & Acc. CW$({C_i},OTD)$ \\
\midrule
0 & 99.36 \% & 0.72  \% & 94.2 \\
1 & 97.70 \% & 13.72 \% & 93.7 \\
2 & 97.73 \% & 24.28 \% & 84.7 \\
3 & 97.60 \% & 43.20 \% & 83.3 \\
4 & 97.64 \% & 53.17 \% & 79 \\
5 & 97.73 \% & 56.45 \% & 79.3 \\
\bottomrule
\end{tabular}
\caption{Classifier trained repeatedly against FGSM for MNIST and augmented with NAC in each round} \label{repeattrainingNAC}
\end{table}

\section{Related Work}
A thorough survey of security issues in machine learning is present in surveys~\cite{papernot2016towards} and some of the first results appeared in ~\cite{lowd2005adversarial,dalvi2004adversarial}. Here we discuss the most closely related work.

\textbf{Attacks}: Most previous attack work focuses on adversarial examples for computer vision tasks. Multiple techniques to create
such adversarial examples have been developed recently. Broadly, such attacks can be categorized as either using costs gradients~\cite{fgsm,deepfool,huang2015learning,biggio2013evasion} or the forward gradient of the neural network~\cite{limitation} and perturbing along most promising direction or 
directly solving an optimization problem (possibly using gradient ascent/descent) to find a perturbation~\cite{moosavi2017universal,carlini2017towards}.
In addition, adversarial examples have been
shown to transfer between different network architectures, and networks trained on disjoint subsets of
data~\cite{szegedy2013intriguing,papernot2016transferability}. Adversarial examples have also been shown to translate to the real world~\cite{kurakin2016adversarial}, that is, adversarial images can remain adversarial even after being printed and
recaptured with a cell phone camera.  Attacks on non-neural networks have also been explored in literature~\cite{biggio2013evasion}. Our approach is distinctly different from all these approaches as we pose the problem of generating adversarial samples as a generative learning problem, and demonstrate generation of adversarial examples given access to any given classifier. Our approach also applies to any classifier that output class probabilities and not just neural networks.

\textbf{Defense}: Also, referred to as robust classification in many papers, defense techniques can be roughly categorized into techniques that do 
(1) adversarial (re)training, which is adding back adversarial examples to the training data and retraining the classifier, often repeatedly~\cite{li2016general}, or modifying loss function to account for attacks~\cite{huang2015learning};
(2) gradient masking, which targets that gradient based attacks by trying to make the gradient less informative~\cite{papernot2016practical};
(3) input modification, which are techniques that modify (typically lower the dimension) the feature space of the input data to make crafting adversarial examples difficult~\cite{xu2017feature};
(4) game-theoretic formulation, which modifies the loss minimization problem as a constrained optimization with constraints provided by adversarial utility in performing perturbations~\cite{li2014feature}, and (5) filtering and de-noising, which aims to detect/filter or de-noise adversarial examples (cited below). 

Our defense approach differs from the first four kinds of defense as our DLN approach never modify the classifier or inputs but add a sanitizer (DLN) before the classifier. First, this increases the capacity of the resultant classifier $C'$, so that it can model more complex separators, which is not achieved when the classifier family stays the same. Further, our defense is agnostic to the type of attack and does not utilize properties of specific types of attacks. Interestingly, the DLN approach can be used with any classifier that output class probabilities and not just neural networks. Further, NAC is a very minor modification to the classifier that, distinct from other randomized approaches~\cite{vorobeychik2014optimal} that randomize over multiple classifiers, aims to mask the classifier boundary. Also, NAC can work with other defenses unlike techniques that modify inputs to try and defend against CW~\cite{xu2017feature}.

More closely related to our work are some defense techniques that have focused on detecting and filtering out adversarial samples~\cite{li2016adversarial,grosse2017statistical} or de-noising input~\cite{gu2014towards}; here the filter or de-noiser with the classifier could be considered as a larger neural network. However, unlike these work, our goal for DLN is targeted sanitization. Moreover, recent attack work~\cite{carlini2017adversarial} have produced attack techniques to defeat many known detection techniques. Our technique provides the flexibility to be resilient against more powerful attacks by training the DLN with such an attack for high perturbation attacks or using NAC for low perturbation attacks.

Lastly, two concurrent unpublished drafts (available online) have \emph{independently and simultaneously} proposed an attack~\cite{atn} similar to ALN and a defense~\cite{chen2016evaluation} apparently similar to DLN. The difference for the attack work is in using the class label vs classifier output in $\overline{opsim}$ term for the attack. For the defense work, we differ as we show how DLN technique extends to multiple attacks and can be repeatedly used in an attack-defense competition. Moreover, unlike these drafts, we provide another defense technique NAC that works against CW, define robustness and show that our defense techniques approximately aims to achieve our definition of robustness. Further, our formal reasoning reveals the underlying nature of attacks and defenses. 

\section{Conclusion and Future Work}
Our work provides a new learning perspective of the adversarial examples generation and defense problems with a formal intuition of how these approaches work, using which we were able to defend against multiple attacks including the potent CW. Further, unlike past work, our defense technique does not claim to a catchall or specific to any attack; in fact, it is flexible enough to possibly defend against any attack. Posing the attack and defense as learning problems allows for the possibility of using the rapidly developing research in machine learning itself to make the defense more effective in future, for example, by using a different specialized neural network architecture rather than an autoencoder.
 A number of variations of our theory and and tuning of the application framework provides rich content for future work.

\clearpage
\bibliographystyle{ACM-Reference-Format}
\bibliography{ref}

\clearpage


\end{document}